\newtheorem{theorem}{Theorem}
\newcommand{\BibTeX}{B\kern-.05em{\sc i\kern-.025em b}\kern-.08em\TeX}
\begin{document}


\begin{frontmatter}


\paperid{2580} 






\title{Decentralized Unlabeled Multi-agent Pathfinding Via Target And Priority Swapping (With Supplementary)}


\author[A, B]{\fnms{Stepan}~\snm{Dergachev}\orcid{0000-0001-8858-2831}\thanks{Corresponding Author. Email: dergachev@isa.ru\\ This is a pre-print of the paper accepted to ECAI 2024. Its main body is similar the camera-ready version of the conference paper. In addition this pre-print contains Supplementary Material incorporating extended empirical results and analysis.}}
\author[A, B, C]{\fnms{Konstantin}~\snm{Yakovlev}\orcid{0000-0002-4377-321X}}

\address[A]{Federal Research Center ``Computer Science and Control'' of the Russian Academy of Sciences}
\address[B]{HSE University}
\address[C]{AIRI}


\begin{abstract}
In this paper we study a challenging variant of the multi-agent pathfinding problem (MAPF), when a set of agents must reach a set of goal locations, but it does not matter which agent reaches a specific goal -- Anonymous MAPF (AMAPF). Current optimal and suboptimal AMAPF solvers rely on the existence of a centralized controller which is in charge of both target assignment and pathfinding. We extend the state of the art and present the first AMAPF solver capable of solving the problem at hand in a fully decentralized fashion, when each agent makes decisions individually and relies only on the local communication with the others. The core of our method is a priority and target swapping procedure tailored to produce consistent goal assignments (i.e. making sure that no two agents are heading towards the same goal). Coupled with an established rule-based path planning, we end up with a TP-SWAP, an efficient and flexible approach to solve decentralized AMAPF. On the theoretical side, we prove that TP-SWAP is complete (i.e. TP-SWAP guarantees that each target will be reached by some agent). Empirically, we evaluate TP-SWAP across a wide range of setups and compare it to both centralized and decentralized baselines. Indeed, TP-SWAP outperforms the fully-decentralized competitor and can even outperform the semi-decentralized one (i.e. the one relying on the initial consistent goal assignment) in terms of flowtime (a widespread cost objective in MAPF).

\end{abstract}

\end{frontmatter}


\section{Introduction}
\label{sec:intro}

Multi-agent navigation is a vital and non-trivial problem which arises in various practical applications such as mobile robotics, transportation systems, video-games etc. Generally, the problem asks to find a set of non-colliding trajectories (paths) for a group of agents operating in a shared workspace. Numerous modifications, setups and approaches for this problem exist. One of the most well-studied setups is when each agent is asked to reach a specific goal location, i.e., the assignment of goals to agents is given as the problem input~\citep{stern2019multi}. Another variant is the so-called, unlabeled or anonymous multi-agent pathfinding (AMAPF)~\citep{yu2013multi}. In this setting it is assumed that the agents are interchangeable in a sense that for a single agent there is no strict requirement to achieve a particular goal. It is this problem that we focus on in this work.


\begin{figure}[t]
    \centerline{\includegraphics[width=0.95\columnwidth]{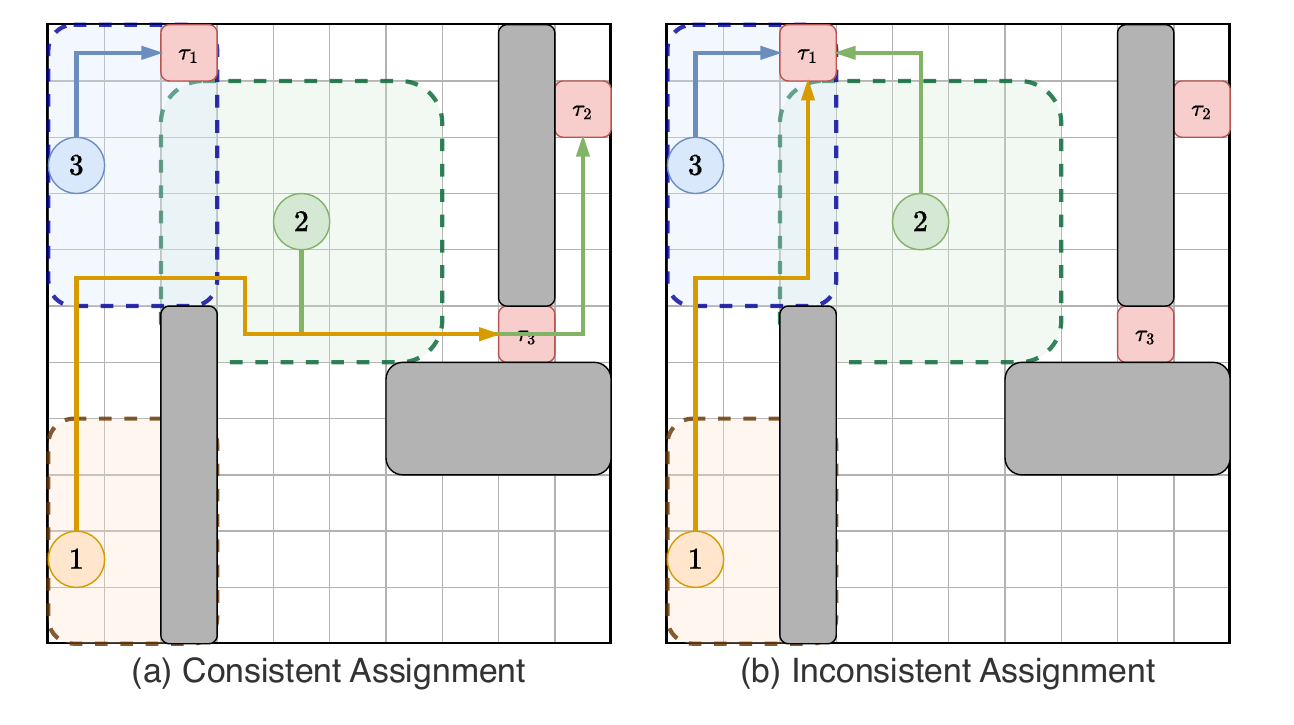}}
    \caption{An example of a decentralized AMAPF instance with a consistent (left side) and an inconsistent (right side) initial assignments. Solid circles depict agents. Red squares are the goals. The colored area around each agent is its communication zone (it is able to communicate with the others only if they reside inside this zone).\vspace*{20 pt}} 
    \label{fig:vis_abs}
\end{figure}

Numerous methods have been recently proposed to solve MAPF. Some of them are intended to find optimal solutions w.r.t. space-time discretization (CBS~\citep{sharon2015conflict}, M*~\citep{wagner2011m}, ICTS~\citep{sharon2013increasing} to name a few), while the others trade-off optimality for lower runtimes in a controlled fashion, like ECBS~\citep{barer2014suboptimal}, EECBS~\citep{li2021eecbs}, ODrM*~\citep{ferner2013odrm}, or completely ignore the cost objective in favor of smaller runtime and scalability like Push and Rotate~\citep{de2013push}, PIBT~\citep{okumura2022priority} and others. The same applies to AMAPF, where both optimal solvers~\citep{yu2013multi} and more scalable suboptimal solvers~\citep{okumura2023solving} do exist.

Still, most of the state-of-the-art (A)MAPF solvers intrinsically assume that there exist a centralized controller that fully observes the environment and is in charge of constructing plans that the agents need to execute. In practice, however, deploying such centralized systems may be costly and decentralized methods to tackle (A)MAPF are desirable. In this case each agent has to decide on its own, based on the limited observation/communication, how to choose the target and move towards it while avoiding the collisions.

Most of the recent decentralized MAPF solvers are learning-based: PRIMAL~\citep{sartoretti2019primal}, G2RL~\citep{wang2020mobile}, SCRIMP~\citep{wang2023scrimp}, FOLLOWER~\citep{skrynnik2024learn} to name a few. Surprisingly, the decentralized AMAPF is much less studied and to the best of our knowledge no established decentralized AMAPF solver exists. Our work aims to fill this gap.

We start with a prominent (suboptimal) centralized AMAPF solver, i.e. TSWAP~\citep{okumura2023solving}, and show how can one build a semi-decentralized and then fully decentralized AMAPF algorithm on top of it. The main bottleneck in doing so is resolving inconsistencies in individual goal assignments, i.e. dealing with the problem when several agents are heading towards the same goal -- see Fig.~\ref{fig:vis_abs}. We introduce both a naive way to cope with it and a more involved one, that relies only on the local communication and is based on a specific target and priority swapping procedure. We prove that the most advanced version of our solver, which we call TP-SWAP, is complete, i.e. it guarantees that each agent will arrive to a unique goal, thus all goals will be reached.

We conduct a thorough empirical evaluation of the proposed decentralized  AMAPF solvers and compare them with the centralized baseline, i.e. TSWAP. We show that our most enhanced algorithm, i.e. TP-SWAP, indeed, outperforms the other suggested decentralized variants. Moreover, its flowtime (one of the widely used measures of the solution cost) is consistently lower (better) compared to the centralized TSWAP with the random goal assignment, and its makespan (another widely used measure of the solution cost) approaches the one of the latter when the number of agents increases. Overall, our findings pave the way to creating efficient fully decentralized AMAPF solvers that rely only on the local communication/observations, while still guarantying completeness.

\section{Related Works}
\label{sec:related_works}

\textit{MAPF} is a well-studied problem with a large number of different formulations~\citep{stern2019multi}. The dominant number of papers devoted to \textit{MAPF} assume the existence of a centralized controller and rely on a discretized representation of the workspace (e.g. a grid).

The classical \textit{MAPF} problem formulation assumes that each agent is assigned a specific target to reach. It is known that finding an optimal solution, whether in terms of makespan or flowtime, is an NP-hard problem~\citep{yu2013multi}. The techniques to solve \textit{MAPF} are plentiful. Some of them are aimed at obtaining provably optimal~\citep{sharon2015conflict} or bounded-suboptimal~\citep{barer2014suboptimal} solutions. These methods typically do not scale well to a large number of agents. On the other hand, if it is necessary to quickly find a solution and the cost is not of utmost importance, then the rule-based solvers can be applied~\citep{de2013push}. A possible compromise between the solution cost and the performance may be provided by the prioritized planning~\citep{vcap2015prioritized}, which often finds close-to-optimal solutions and is also fast and scalable. However, prioritized planning is incomplete in general.

Another variant of \textit{MAPF} is the \textit{anonymous/unlabeled MAPF (AMAPF)}~\citep{stern2019multi,okumura2023solving,yu2013multi}, when the goals are not assigned to the agents initially. Unlike classical MAPF, AMAPF is always solvable~\citep{okumura2023solving}. Similarly to MAPF, most of the AMAPF solver are centralized. One of the key approaches is method~\citep{yu2013multi}, that allows finding makespan-optimal solutions in polynomial time (in contrast to the classical MAPF), although making practical problems on large graphs is inefficient. The paper~\citep{ali2024improved} addresses this limitation and proposes improvements to find solutions more efficiently. Alternatively, there is the fast TSWAP solver~\citep{okumura2023solving}, which allows obtaining suboptimal solutions.



The number of methods that consider both decentralized scenarios and unlabeled case is very limited. An adaptation of the centralized method of~\citep{turpin2014capt} to the decentralized setting was presented in the same paper. However, as was shown in~\citep{panagou2019decentralized}, it does not guarantee the absence of collisions between the agents. The latter paper presents another decentralized method that is based on consideration of different number of potential goal assignments. This number can be prohibitively large. Moreover, the algorithm itself is not suited to operate in the non-empty environments.


Finally, a rapidly evolving research line is the one that suggests utilization of deep learning and multi-agent reinforcement learning for both centralized~\citep{ji2021decentralized} and decentralized navigation and goal assignment, see~\citep{lowe2017multi,khan2019learning, khan2021large}. However, the learnable methods are not typically able to provide any sorts of guarantees, require extensive training and often perform poorly on the problem instances that are not alike the ones used for training.


\section{Problem Statement}
\label{sec:problem_statement}



We first present the centralized variant of the problem and then switch to the decentralized one. 

\paragraph{Centralized AMAPF} Consider a set $\mathcal{N}$ of $n$ agents, each confined to a connected, undirected graph $\mathcal{G} = (\mathcal{V}, \mathcal{E})$. There is a mapping $s : \mathcal{N} \rightarrow \mathcal{V}$ that assigns each agent to a specific start vertex, and a set $\mathcal{T} \subset \mathcal{V}$ of $n$ target/goal vertices.


Time is discretized into timesteps. At each timestep, an agent can choose either to move to an adjacent vertex (a \emph{move} action) or to remain at its current vertex (a \emph{wait} action). A path for an agent $i$ from vertex $v \in \mathcal{V}$ to vertex $v' \in \mathcal{V}$, denoted by $\pi^i(v, v')$, is defined as a sequence of actions that takes the agent from $v$ to $v'$. The cost of the path is determined by the timestep at which the agent reaches its final destination. Additionally, we assume that once an agent reaches its target, it remains there and waits.

Paths should not include two types of conflicts:
\begin{itemize}
    \item \textit{Vertex conflict}: occurs between the agents $i, j \in \mathcal{N}$ \textit{iff} they stay at the same vertex at the same timestep.
    \item \textit{Swapping conflict}: occurs between the agents $i, j \in \mathcal{N}$ \textit{iff} they traverse the same edge at the same timestep.
\end{itemize}



\textit{The problem} is to find a sequence of actions (a path) for each agent such that \emph{(i)} each individual path for agent $i$ starts at the predefined start location $s(i)$ and ends at one of the predefined goals $\tau \in \mathcal{T}$; \emph{(ii)} all goal locations are reached; and \emph{(iii)} all pairs of paths are conflict-free.

The quality of an AMAPF solution is typically evaluated using either \emph{flowtime} or \emph{makespan}, with lower values indicating better solutions. Flowtime is the sum of the costs of all paths in the solution, while makespan is the maximum cost among these paths. In this work, we do not impose a strict requirement to optimize the cost of the solution, but naturally, solutions with lower costs are preferable.

\paragraph{Decentralized AMAPF} In a decentralized setup, each agent independently decides on its actions at each timestep, based on the limited information it obtains through local observation and communication. We assume that each agent has knowledge of the entire graph and can exchange information with other agents located within a distance of $k$ edges from its current vertex. In our experiments, we use grid environments where the communication range is defined by a $(2k+1) \times (2k+1)$ cell area with the agent positioned at the center.

Moreover, we allow for chain communication between agents. This means that if agent $i$ is within the communication range of agent $j$, and agent $j$ is within the communication range of agent $k$, then agent $i$ can exchange information with agent $k$ through agent $j$, and vice versa.

For the purposes of this study, we abstract away from communication issues and assume instantaneous, error-free information exchange.




\section{Methods}
\label{sec:suggested_method}


Our decentralized solver, TP-SWAP, is developed on the basis of the rule-based centralized method, TSWAP~\cite{okumura2023solving}. Therefore, we begin by explaining TSWAP and then gradually explore how it can be adapted for the decentralized seeting.

\subsection{TSWAP}
\label{sec:tswap}



TSWAP~\citep{okumura2023solving} solves the AMAPF problem in two stages. In the first stage, it creates an initial \textit{consistent goal assignment}, i.e. the one where each goal is uniquely assigned to a single agent, ensuring that no two agents share the same goal. In the second stage, the algorithm iteratively moves the agents toward their assigned goals and, if necessary, reassigns goals between them while always maintaining the consistency of the goal assignment.

\paragraph{Initial Goal Assignment} In general, TSWAP can handle any consistent goal assignment. In their work, the authors of TSWAP explored several methods for initial goal assignment and evaluated them empirically. For our experiments, when using TSWAP as a centralized baseline, we adopted the assignment method that demonstrated the most promising results in the original paper.

\paragraph{Moving towards the goals with target swapping} 


At each planning iteration, TSWAP sequentially examines all agents. For each agent $i$, it identifies the current vertex $v$ and deterministically selects the next vertex $v'$ based on the shortest path to its goal. If vertex $v'$ is free, then it is marked as occupied by agent $i$.

If vertex $v'$ is already occupied by another agent $j$, the agent $i$ picks a wait action and checks two possible cases. First, if vertex $v'$ is the target of agent $j$, agents $i$ and $j$ swap their targets, as illustrated in Fig.~\ref{fig:tswap}-a. Otherwise, TSWAP checks whether agent $i$ is involved in a deadlock. A deadlock occurs when a loop sequence of agents (including $i$) is formed, such that each agent's next vertex in their shortest path is currently occupied by the next agent in the sequence. If a deadlock is detected, the targets of the agents within the sequence are rotated, meaning each agent is reassigned the target of the next agent in the loop. This scenario is depicted in Fig.~\ref{fig:tswap}-b.

After the planning iteration is complete, the algorithm moves each agent to its designated vertex, if necessary.

\begin{figure}[t]
    \centerline{\includegraphics[width=0.85\columnwidth]{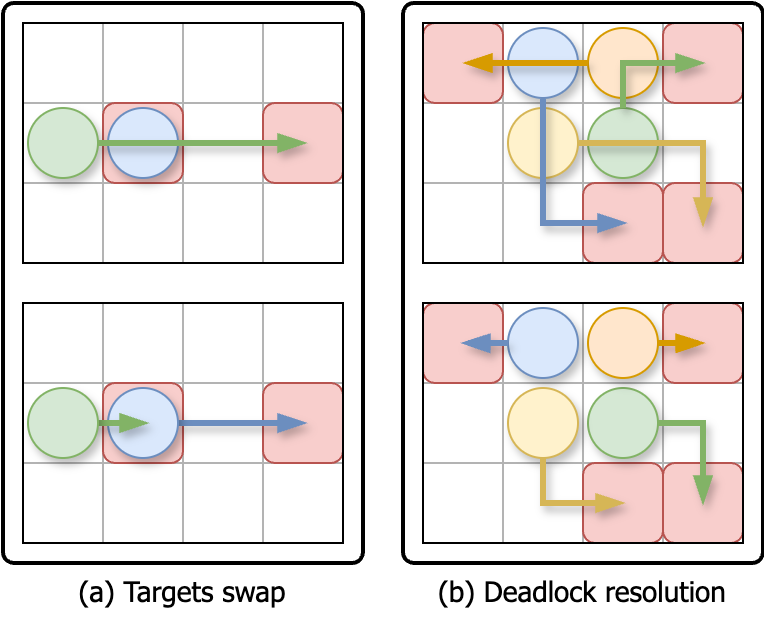}}
    \caption{Examples of conflict and deadlock resolutions in TSWAP: (a) Illustration of the target-swapping mechanism when an agent occupying its target location blocks the path for another agent. (b) Illustration of the deadlock resolution mechanism, where a sequence of agents forms a loop, causing them to block each other's paths. \vspace*{20 pt}} 
    \label{fig:tswap}
\end{figure}


\subsection{Decentralized TSWAP with Consistent Target Assignment}
\label{subsec:dec_tswap_cons}





TSWAP can, in principle, be adapted to a decentralized setup where a centralized controller is not present, and agents have limited communication capabilities within a certain range. However, an initial consistent target assignment is still mandatory.

At each step, each agent can execute an iteration of the TSWAP algorithm using only the information available within its communication range. It is important to note that all agents within the same subgroup have access to the same information, ensuring that the outcome of the algorithm's execution will be consistent across all members of the group.

To correctly perform the TSWAP iteration, an agent needs specific information. First, all members of the subgroup must examine the agents in the same order. Therefore, each agent needs to know the priority $pr$ of the other members. We assume that each agent $i$ is identified by a unique number (e.g., the serial number of the robot). For simplicity, we will assume that the agent, its identifier, and its priority are identical, i.e., $pr = i$.

Second, each agent in the subgroup must be aware of the positions of the other subgroup members and their assigned targets. This information is essential for avoiding vertex conflicts and for performing target-swapping or deadlock resolution.

Finally, we must establish the communication conditions that are sufficient to safely and correctly execute the TSWAP iteration. Specifically, we need to ensure that for each agent $i$ and its next vertex $v'$, no other agent $j$ is located at $v'$ or plans to move there in the next step. To meet this requirement, each agent must have information about other agents within a range of at least two vertices from its own location. Therefore, the minimum necessary communication range must cover this distance.

Additionally, to detect and resolve deadlocks, an agent must have information about any chain of adjacent agents to which it belongs. This condition is satisfied if chain communication is allowed, as discussed in the section on the decentralized scenario in Section~\ref{sec:problem_statement}.

It is important to note that if the initial assignment is inconsistent, the described variant of TSWAP may fail to solve the instance. Thus, it can arguably be considered \emph{semi}-decentralized, as achieving a consistent initial goal assignment requires some form of centralization, such as global information sharing. 

In the next section, we will focus on fully decentralized AMAPF solvers (based on TSWAP) that do not require the initial assignment to be consistent.

\subsection{Naive Fully-decentralized TSWAP}
\label{subsec:naive}




In a fully decentralized setting, each agent must independently choose its goal. This might be done randomly, or each agent might select the closest goal to its starting location. However, in such cases, multiple agents may head toward the same goal, while some goals might remain unassigned. A naive approach to resolve these inconsistencies involves memorizing which goals have already been achieved by other agents and selecting a different goal from those not on this list if necessary.

Specifically, at each step, every agent $i$ must verify through local communication that its current goal (the one it is heading toward) is not already occupied by another agent. If it is, the agent should add this goal to a dedicated list of occupied goals (initialized as $\emptyset$) and then select another goal (randomly or according to a specific rule) from those not included in this list. Additionally, agents can exchange their lists of occupied goals to increase the likelihood that each agent will choose an unoccupied goal.

While this approach can indeed restore consistency in goal assignments and resolve conflicts where multiple agents are heading toward the same goal, it does not fully utilize the potential of local information exchange. As a result, agents may still need to explore all the targets individually until finding an unoccupied one. To address this limitation, we propose TP-SWAP -- an improved, fully decentralized AMAPF solver that allows agents to select goals in a more informed manner.

\begin{algorithm}[!t]
    \caption{TP-SWAP Algorithm}
    \footnotesize
    \label{alg:main_tpswap}
    \SetKwInput{Input}{Input}
    \SetAlgoNlRelativeSize{-1}
    \newcommand{\mycapfn}[1]{\textsc{\scriptsize #1}}\SetFuncSty{mycapfn}
    \SetKwFunction{selectTarget}{selectTarget}
    \SetKwFunction{generateUniquePriority}{generateUniquePriority}
    \SetKwFunction{identifyLocalAvailableAgents}{identifyLocalAvailableAgents}
    \SetKwFunction{recievePosition}{recievePosition}
    \SetKwFunction{getTarget}{getTarget}
    \SetKwFunction{getPriority}{getPriority}
    \SetKwFunction{TPUPDATE}{TP-UPDATE}
    \SetKwFunction{moveTo}{moveTo}

    \KwIn{$i$ -- agent unique id; $\mathcal{G}$ -- graph; $\mathcal{T}$ -- set of all goals.}

    \BlankLine
    $\tau \gets$ \selectTarget{$\,\mathcal{G}$, $\mathcal{T}$}\; \label{line:select_goal}
    $pr \gets$ \generateUniquePriority{$\,i$}\; \label{line:chose_prior}
    $TP[\tau'] \gets -\infty \; \forall \tau' \in \mathcal{T}$; $TP[\tau] \gets pr$\;\label{line:create_tp}

    \While{\normalfont{{\textsc{true}}}}{ \label{line:move_begin}
        $NH \gets$ \identifyLocalAvailableAgents{}\; \label{line:communication_begin}
        $TA \gets \{\}; PR \gets \{\}; V \gets \{\}$\; \label{line:create_tables}
        $V[j] \gets$ \recievePosition{$\,j$} $\forall j \in NH$\; \label{line:get_st}
        $TA[j] \gets$ \getTarget{$\,j$} $\forall j \in NH$\; \label{line:get_target}
        $PR[j] \gets$ \getPriority{$\,j$} $\forall j \in NH$\; \label{line:get_pr}
        $TP[\tau] \gets \max\limits_{j \in NH}\,{TP^j[\tau]}\,\forall j \in NH$\; \label{line:get_tp}
        $\tau', pr', v', TP' \gets$\TPUPDATE{$i, NH, V, TA, PR, TP, \mathcal{T}$} \; \label{line:tp_update}
        $\tau \gets \tau'$; $pr \gets pr'$; $TP \gets TP'$\; \label{line:update_priority}\label{line:update_goal}
        \moveTo($v'$)\; \label{line:move}
    } \label{line:move_end}
\end{algorithm}

\subsection{TP-SWAP: Target-Priority Swapping For Decentralized AMAPF}
\label{subsec:nav_pipe}

The approach used to enhance the previously described fully decentralized AMAPF solver is based on two key ideas. First, it is advantageous not only to identify and memorize already occupied goals (and possibly share this information) but also to track the desired goals of other agents. Second, agent priorities can be utilized to restore consistency in goal assignments, with the possibility of exchanging these priorities, as will be explained later.

To achieve this, each agent individually maintains a \textit{target-priority} assignment table, $TP$, which is a mapping from agent priorities (not identifiers!) to targets. The $TP$ table is maintained throughout the entire process and is used by agents to resolve conflicts in their current goal assignments. 

In addition to the target-priority mapping, agents also temporarily create and share information at each time step regarding their current locations (graph vertices), targets, and priorities. To facilitate this, temporary tables/dictionaries $V$, $TA$, and $PR$ are introduced, which are generated from scratch at each iteration of the algorithm.

The general outline of TP-SWAP is presented in Algorithm~\ref{alg:main_tpswap}. Similar to the naive approach, each agent individually selects its target (line~\ref{line:select_goal}) and determines its priority $pr$ (line~\ref{line:chose_prior}) before starting the movement.

Initially, the value for each target $\tau \in \mathcal\{T\}$ in the table is set to $TP[\tau] = -\infty$. When an agent selects a goal $\tau \in \mathcal{T}$, it updates the corresponding entry in the table to $TP[\tau] = pr$ (line~\ref{line:create_tp}). This indicates that, at the start, the agent only has information about its own goal.

An iterative process then begins, guiding the agent toward its goal. Each iteration starts by identifying the agents available for communication (line~\ref{line:communication_begin}) and gathering information about them. The agent collects information on the locations ($V$), current targets ($TA$), and current priorities ($PR$) of the members in the connected subgroup (lines~\ref{line:create_tables}-\ref{line:get_pr}).

These tables are populated with up-to-date data relevant to the current group to which the agent belongs, and they are updated at each step. The position of a neighboring agent $j$ at the current timestep is denoted as $V[j]$, its target as $TA[j]$, and its priority as $PR[j]$.

Next, the agent updates its $TP$ table using the collective knowledge of all subgroup members. If the agent receives information that a target $\tau$ was selected by another agent (who may not be part of the current subgroup) with a priority $pr' \in \mathbb{N}$ higher than what is currently recorded in the table ($pr' > TP[\tau]$), the agent updates the table to $TP[\tau] = pr'$ (line~\ref{line:get_tp}).

Once all necessary information has been gathered, the agent initiates the procedure for resolving assignment conflicts, updating targets and priorities, and selecting the next vertex. This is done by a (core) routine named \textsc{TP-Update}(line~\ref{line:tp_update}). It eliminates inconsistencies in goal assignment within the current subgroup and prevents collisions, similarly to the TSWAP algorithm. A more detailed description of the target-priority update procedure is provided in the next section.

Finally, the algorithm updates the agent's current state and moves the agent to its next determined location (lines~\ref{line:update_goal}-\ref{line:move}).

\subsubsection{TP-UPDATE: Procedure to resolve conflicts in TP-table}
\label{subsec:goal_exch}

\begin{algorithm}[!t]
\caption{TP-UPDATE Procedure}
\footnotesize
\label{alg:tp_update}
\SetKwInput{Input}{Input}
\SetAlgoNlRelativeSize{-1}
\newcommand{\mycapfn}[1]{\textsc{\scriptsize #1}}\SetFuncSty{mycapfn}
\SetKwFunction{sortDecreasingPriorities}{sortDecreasingPriorities}
\SetKwFunction{nextVertex}{nextVertex}
\SetKwFunction{swap}{swap}
\SetKwFunction{inDeadlock}{inDeadlock}
\SetKwFunction{getDeadlockSequence}{getDeadlockSequence}
\SetKwFunction{rotateTargetsPriorities}{rotateTargetsPriorities}

\KwIn{$i$ -- agent unique id; $NH$ -- current subgroup; $V$ -- positions of agents in the current group; $TA$ -- target assignment of the current group; $PR$ -- table of priorities for agents in the current group; $TP$ -- table of targets and corresponding agents' priorities; $\mathcal{T}$ -- set of all targets; $\mathcal{G}$ -- graph.}
\BlankLine
$NH \gets$ \sortDecreasingPriorities{$NH$, $PR$}\; \label{line:sort_agents}

\For{$j \in NH$}{ \label{line:incons_begin}
    \If{$TP[TA[j]] > PR[j]$}{ \label{line:incons_cond}
        $TA[j] \gets \tau' \in \mathcal{T}$ s.t. closest to $V[j]$, $TP[\tau'] \leq PR[j]$\; \label{line:incons_new_goal}
        $TP[TA[j]] \gets PR[j]$\; \label{line:new_goal_prior}
    }
} \label{line:incons_end}

\For{$j \in NH$}{ \label{line:tswap_begin}
    \If{$S[j] = TA[j]$}{
        \textbf{continue}\;
    }
    $v \gets$ \nextVertex{$V[j]$, $TA[j]$, $\mathcal{G}$}\;
    \If{$\exists k \in NH$ s.t. $V[k] = v$}{
        \If{$v = TA[k]$}{
            \swap{$TA[j]$, $TA[k]$}\;
            \swap{$PR[j]$, $PR[k]$}\; \label{line:swap_pr}
        }
        \ElseIf{\inDeadlock{$j$, $V$, $TA$}}{
            $D \gets$ \getDeadlockSequence{$\,j$, $V$, $TA$}\;  
            \rotateTargetsPriorities{$D$, $TA$, $PR$}\; \label{line:roatate_pr}
        }
    }
    \Else{
        $V[j] \gets v$\;
    }
} \label{line:tswap_end}

\KwRet $TA[i], PR[i], V[i], TP$\; \label{line:goals_update_return}

\end{algorithm}

The most critical component enabling the correct execution of the suggest algorithm is the target-priority update procedure, which is described below\footnote{It is important to note that the operations described below are performed solely based on information available to a specific group of agents that are able to communicate and share information with each other.}.

The \textsc{TP-UPDATE} procedure pursues two goals. First, it resolves assignment conflicts and updates the target-priority table, ensuring that the target assignment within the current connected subgroup is consistent. Second, it aligns the target-priority assignment of the current group with the available information. This alignment prevents any agent in the group from being assigned a target that has already been selected by another agent with a higher priority. This rule is enforced even if the higher-priority agent is not part of the current group, provided that its information is present in the $TP$ table.

This approach not only ensures the correct execution of the TSWAP algorithm but also prevents agents from selecting targets already claimed by higher-priority agents, thereby maintaining global consistency.

Let's examine Alg.~\ref{alg:tp_update} in more detail. To implement the key ideas, all agents within the current group are considered and sorted according to their priority (lines~\ref{line:sort_agents}-\ref{line:incons_end}). If the $TP$ table indicates that the target of a particular agent $i$ has been selected by another agent with a higher priority, then the agent $i$ will select a new target (line~\ref{line:incons_new_goal}).

The new target is selected based on the following rules: \textit{(i)} it must be the closest available target to the agent, and \textit{(ii)} the priority recorded for it in the $TP$ table must be lower than the current priority of the agent.  In Section~\ref{sec:theoretical}, we will demonstrate that there always exists a target satisfying these conditions.

After the $TP$ table is updated (along with the auxiliary $TA$ and $PR$ tables), these updated tables are used as input for the iteration of the TSWAP algorithm (lines~\ref{line:tswap_begin}-\ref{line:tswap_end}). Additionally, if two agents exchange targets during this TSWAP step, they must also exchange their priorities (lines~\ref{line:swap_pr},~\ref{line:roatate_pr}).

As a result, an agent gets an updated table $TP$ along with the updated priority, target and a vertex to move to (line~\ref{line:goals_update_return}).

\subsubsection{Running Example}

\begin{figure}[t]
    \centerline{\includegraphics[width=0.8\columnwidth]{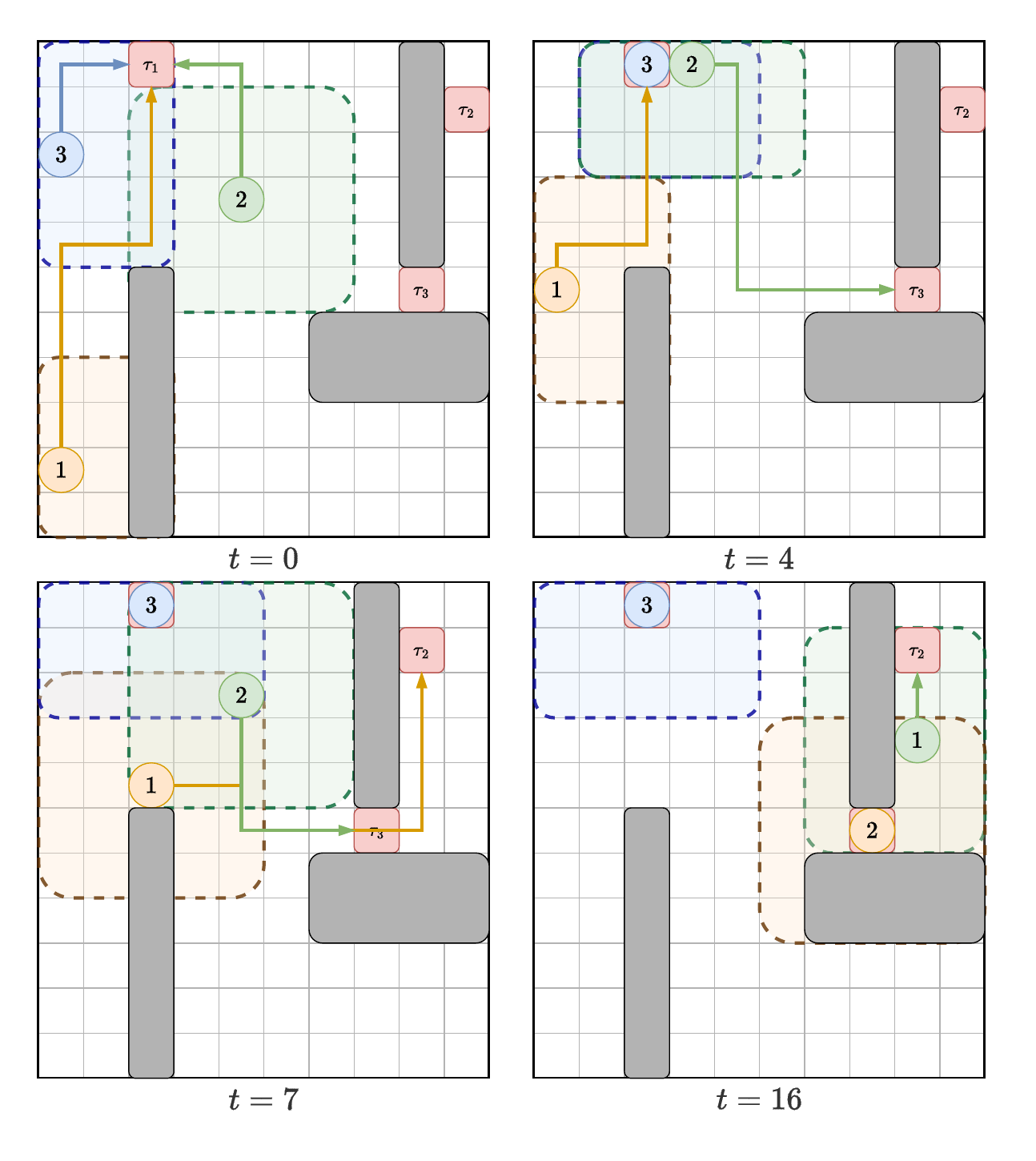}}
    \caption{An example of solving a decentralized AMAPF instance. Agents are depicted as disks (with the number showing their current priority). The dashed lines illustrate the agents' communication ranges. The red cells represent the goals that the agents need to reach. \vspace*{20 pt}} 
    \label{fig:dec_tswap_demo}
\end{figure}

Let's examine an example of solving an AMAPF problem in a decentralized fashion, as illustrated in Fig.~\ref{fig:dec_tswap_demo}.

Here, three agents (shown as blue, green, and orange disks) are confined to a grid. The goals (known to each of the agents) are depicted as the red squares labeled as $\tau_1$, $\tau_2$, $\tau_3$. The communication range of each agent is marked by the dashed lines matching the agents' colors.

At time $t = 0$, each agent selects the closest target and plans a path towards it, shown by the colored arrow. Initially, all agents choose the same target, $\tau_1$, and assign themselves priorities, displayed inside the circles. The agents then begin moving toward their selected targets.

By time $t = 4$, the blue agent has reached $\tau_1$ and is within communication range of the green agent. The green agent, having a lower priority, switches to a new target, $\tau_3$, and recalculates its path. Meanwhile, the orange agent continues moving toward $\tau_1$, unaware that a higher-priority agent has already reached it.

At time $t = 7$, the orange and green agents meet, and the orange agent receives information that both $\tau_1$ and $\tau_3$ have been claimed by higher-priority agents. The orange agent then chooses the remaining target, $\tau_2$.

Between $t = 7$ and $t = 16$, the green agent reaches $\tau_3$ before the orange agent can reach $\tau_2$, blocking the orange agent's path. To resolve the conflict, the orange and green agents swap their goals and priorities. By time $t = 16$, the orange agent has successfully reached $\tau_3$, while the green agent is now heading toward $\tau_2$.

\subsubsection{Theoretical Analysis}
\label{sec:theoretical}

\begin{theorem}
\label{theorem:complete}
There exists a finite time $t$ at which all goals will be achieved by the agents utilizing Alg.~\ref{alg:main_tpswap}.
\end{theorem}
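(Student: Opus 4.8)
The plan is to establish completeness by combining two ingredients: (i) showing that the target-priority table $TP$, together with the swapping rules, eventually stabilizes into a globally consistent assignment, and (ii) invoking the completeness of (decentralized) TSWAP once the assignment has stabilized. First I would focus on the well-definedness claim promised in the paper: whenever the inconsistency condition $TP[TA[j]] > PR[j]$ triggers on line~\ref{line:incons_cond}, a target $\tau'$ with $TP[\tau'] \le PR[j]$ exists. The key observation is that priorities are drawn from a set that is in bijection with the agents (each agent has a unique priority, preserved-up-to-swapping), so at any time the multiset of priorities is exactly the initial one; since there are $n$ agents and $n$ targets, a counting/pigeonhole argument shows that the agents with priority $\ge PR[j]$ cannot have claimed all $n$ targets at values $> PR[j]$ in the $TP$ table unless one of them has priority exactly $PR[j]$ — and careful bookkeeping of how $TP$ entries are written (only ever to the priority of the claiming agent, and only monotonically upward via line~\ref{line:get_tp}) rules that out, leaving at least one admissible target. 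I would phrase this as an invariant maintained across iterations.

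Next I would set up a monotonicity / potential argument to show the target assignment converges. Define, for the global configuration, the vector of $TP$ values indexed by targets; line~\ref{line:get_tp} and line~\ref{line:new_goal_prior} only ever raise entries of $TP$ (or keep them fixed), so each of the $n$ entries is non-decreasing over time and bounded above by the maximal priority, hence takes finitely many values. Therefore there is a finite time $t_0$ after which $TP$ is globally constant. After $t_0$, no agent ever triggers the reselection branch on line~\ref{line:incons_new_goal} in a way that changes $TP$, and I would argue the current-target assignment $TA$ restricted to each connected subgroup is consistent and, because $TP$ has stabilized, these local assignments are mutually compatible — the highest-priority agent's claim on each target is respected everywhere. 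The subtlety here is that even after $TP$ stabilizes, agents still swap targets/priorities inside TSWAP steps; but such swaps are between two agents that are currently communicating and they exchange priorities along with targets, so they preserve both the consistency of the assignment and the set $\{(\,\text{priority},\text{target}\,)\}$ of committed pairs — hence $TP$ is undisturbed. I would make precise that after $t_0$ the execution of each subgroup is exactly a run of decentralized TSWAP on a fixed consistent assignment.

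Finally I would close the argument by appeal to the correctness of TSWAP (Section~\ref{sec:tswap}) in the decentralized form described in Section~\ref{subsec:dec_tswap_cons}: given a consistent assignment and communication range at least two, TSWAP drives every agent to its assigned target in finite time, resolving vertex/swapping conflicts and deadlocks via target rotations. Since after $t_0$ every agent is running TSWAP with a fixed consistent assignment (and any residual target rotations only permute assignments among a cyclic deadlock group without reintroducing inconsistency), all $n$ agents reach distinct targets, so all $n$ goals are achieved at some finite $t \ge t_0$. I expect the main obstacle to be the second paragraph: rigorously proving that $TP$ really does stabilize and that, once it does, the per-subgroup assignments are globally consistent — in particular handling the interaction between the reselection loop (lines~\ref{line:incons_begin}--\ref{line:incons_end}) and the TSWAP loop (lines~\ref{line:tswap_begin}--\ref{line:tswap_end}) within a single iteration, and ensuring no oscillation where an agent repeatedly loses and reclaims targets. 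The cleanest route is probably a lexicographic potential that first accounts for the (finite, monotone) $TP$ vector and then, conditioned on $TP$ fixed, for the standard TSWAP progress measure.
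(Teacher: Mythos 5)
Your overall strategy coincides with the paper's: a well-definedness argument for target reselection (your pigeonhole on priorities is equivalent to the paper's induction that the agent of priority rank $k$ can reject at most $k-1$ targets, both resting on the invariant that each priority value is the recorded claimant of at most one target at a time), followed by a potential-function argument. Indeed, the ``lexicographic potential'' you name in your last sentence is exactly what the paper implements: $\phi = \phi_1 + \phi_2 + C\cdot\phi_3$, where $\phi_3$ is a monotone scalar summary of the $TP$ tables and $C$ is chosen larger than any possible increase of the TSWAP progress measure $\phi_1+\phi_2$, so that any change in the $TP$ information dominates.

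However, the two-phase argument you actually carry out in your second and third paragraphs has a genuine circularity that the combined potential exists to avoid. Stabilization of all $TP$ tables does \emph{not} by itself imply global consistency of the assignment: two agents in disjoint connected subgroups can both head toward the same target while each one's table records only its own (stable) claim, since tables merge only upon communication. To rule this out you must argue that the lower-priority agent eventually comes within communication range of the conflict and is forced to update its table --- but that ``eventually comes within range'' is precisely the TSWAP progress guarantee, which in your write-up is invoked only \emph{after} stabilization and only for globally consistent assignments. The paper resolves this by interleaving the two measures in a single decreasing function: within any timestep, either some agent's $TP$ information strictly improves (so $C\cdot\phi_3$ drops by at least $C$, absorbing any increase in $\phi_1+\phi_2$), or no table changes and the per-subgroup-consistent TSWAP step strictly decreases $\phi_1+\phi_2$ unless all targets are already reached. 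If you keep your two-phase framing, you must separately prove that TSWAP makes progress under assignments that are consistent only within each subgroup; otherwise, replace paragraphs two and three with the lexicographic potential you sketch at the end.
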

\begin{proof}


Consider the following function $\phi$:


\begin{equation}
\begin{split}
\phi(t) = &\phi_1(t) + \phi_2(t) + C \cdot \phi_3(t) \\
\phi_1(t) = &\sum_{i \in \mathcal{N}} \; dist(\,\overline{V_t}[\,i\,], \, \overline{TA_t}[\,i\,]\,) \\
\phi_2(t) = &\sum_{i \in \mathcal{N}} \big| \{ \, j \, : \, j \in \mathcal{N}, \; \overline{TA_t}[\,j\,] \in \Pi(\overline{V_t}[\,i\,], \, \overline{TA_t}[\,i\,])\,\} \big| \\
\phi_3(t) = &\sum_{ pr \in \overline{PR_t}} \big| \{ \, \tau \, : \, \tau \in \mathcal{T}, \; TP^{pr}_t [\,\tau\,] \leq pr \, \} \big|
\end{split}
\end{equation}
\noindent where 
\begin{itemize}
    \item $dist(v, v')$ -- the shortest path length between vertices $v, v' \in \mathcal{V}$,
    \item $\Pi(v, v')$ -- a set of vertices in the shortest path between $v, v' \in \mathcal{V}$, 
    \item $\overline{V_t}$ -- the table of agents' positions on the graph at time $t$
    \item $\overline{TA_t}$ -- the global goal assignment at time $t$
    \item $\overline{PR_t}$ -- the global priority assignment at time $t$
    \item $TP^{pr}_t$ -- the target-priority table at time $t$ of agent with priority $pr$
    \item $C$ -- finite scalar value, at least equal $\bigl( 2\max\limits_{v, v' \in \mathcal{V}} dist(v, v') + 1 \bigr)$
\end{itemize}

The function consists of the following components:
\begin{itemize}
    \item $\phi_1(t)$ represents the total distance from each agent's current position to its assigned target.
    \item $\phi_2(t)$ counts for each agent the number of other agents that lie in its path.
    \item $\phi_3(t)$ counts for each agent the number of targets that agent either knows nothing about or knows are chosen by lower-priority agents.
\end{itemize}

Let's demonstrate that there exists a specific moment in time when all agents will have successfully achieved their goals. To establish this, it suffices to show that \textit{(i)} the TP-UPDATE procedure is correct, i.e. it will always find consistent target assignment for a connected subgroup of agents \textit{(ii)} the function $\phi$ is bounded from below, \textit{(iii)} $\phi$ is decreasing, and \textit{(iv)} $\phi$ reaching the lower bound can occur \emph{iff} all agents have achieved their targets.

\paragraph{TP-UPDATE Correctness} It can be observed that Alg.~\ref{alg:tp_update} (lines~\ref{line:incons_begin}-\ref{line:incons_end}) creates and maintains a consistent goal assignment $TA$ within a subgroup of agents. Crucially, each agent is guaranteed to find a new target if it must abandon its current one (line~\ref{line:incons_new_goal} of Alg.~\ref{alg:tp_update}).

To prove this, let's assign a new numbering to the agents, reversed according to their priorities (i.e., the agent with the highest priority is numbered 1, and the last one is $n$). We will prove by induction that an agent $k$ can reject no more than $k-1$ targets.



\begin{itemize}
    \item \textbf{Base case:} the first agent can reject no targets (see lines~\ref{line:incons_begin}-\ref{line:incons_end} of Alg.~\ref{alg:tp_update}).
    \item \textbf{Induction step:} assume that an agent $k$ can reject no more than $k-1$ targets. Now, consider an agent $k+1$. Suppose this agent can reject more than $k$ targets. Among these rejected targets, at least two must have been targets that were not rejected by agent $k$. If agent $k+1$ rejected these goals, it means they were at some point chosen by agents with priorities higher than $k+1$. Since the priority associated with each goal does not decrease, there are two possibilities: either both of these goals were achieved by agent $k$ (which is impossible), or one of the targets must have been achieved by an agent with a higher priority than $k$. This would mean agent $k$ would have also had to reject that target. Hence, this is a contradiction.

\end{itemize}

\paragraph{Boundedness} Functions $\phi_1$ and $\phi_2$ are each bounded from below by 0 for any assignment, while $\phi_3$ is the sum of a finite set of non-negative integers. This implies that $\phi_3$ is also bounded from below. Consequently, the overall function $\phi$ is bounded below.

\paragraph{Monotonicity} 

If a consistent target assignment is provided as input to Alg.~\ref{alg:tp_update} and remains unchanged between lines~\ref{line:incons_begin} and~\ref{line:incons_end}, the function $\phi_1(t) + \phi_2(t)$ strictly decreases, as established in the analysis of TSWAP in \citep{okumura2023solving}.

The function $\phi_3(t)$ is non-increasing because, at each timestep, the algorithm updates $TP$ tables, ensuring the known number of targets selected by lower-priority agents, or left unselected, either stays the same or decreases (line~\ref{line:get_tp} of Alg~\ref{alg:main_tpswap}).

If $\phi_3$ remains unchanged at a timestep, so $\phi_1 + \phi_2$ decreases unless all targets are achieved. Otherwise, if the subgroup configuration changes, $\phi_3$ decreases. Although $\phi_1 + \phi_2$ may increase, the magnitude of change in $(C \cdot \phi_3)$ is always greater, ensuring $\phi$ decreases overall.

Since agent subgroups do not intersect, $\phi$ can be expressed as the sum of $\phi$ values for each subgroup, making it a sum of decreasing functions.





\paragraph{Targets Achievement} Finally, we demonstrate that the function $\phi$ reaches its lower bound \emph{iff} all agents have reached their goals.

It can be seen that, if all agents reach their targets, $\phi$ stops changing after the next time step, reaching its lower bound.

Conversely, assume $\phi$ is at its minimum, but at least one agent in a connected subgroup hasn't reached its target. If target assignments within this subgroup remain unchanged during lines~\ref{line:incons_begin}-\ref{line:incons_end} of Alg.~\ref{alg:tp_update}, $\phi_3$ stays constant while $\phi_1 + \phi_2$ decreases, leading to a contradiction.

If the target assignment changes, new goal information reduces $\phi_3$, decreasing $\phi$, which contradicts the assumption that $\phi$ has stopped changing.

\end{proof}

Note that while the proof idea is inspired by the completeness proof of the TSWAP algorithm, our scenario is significantly more complex due to its decentralized nature. To the best of our knowledge, this is the first time completeness has been proven in such a context.

\begin{figure}[!h]
    \centerline{\includegraphics[width=0.9\columnwidth]{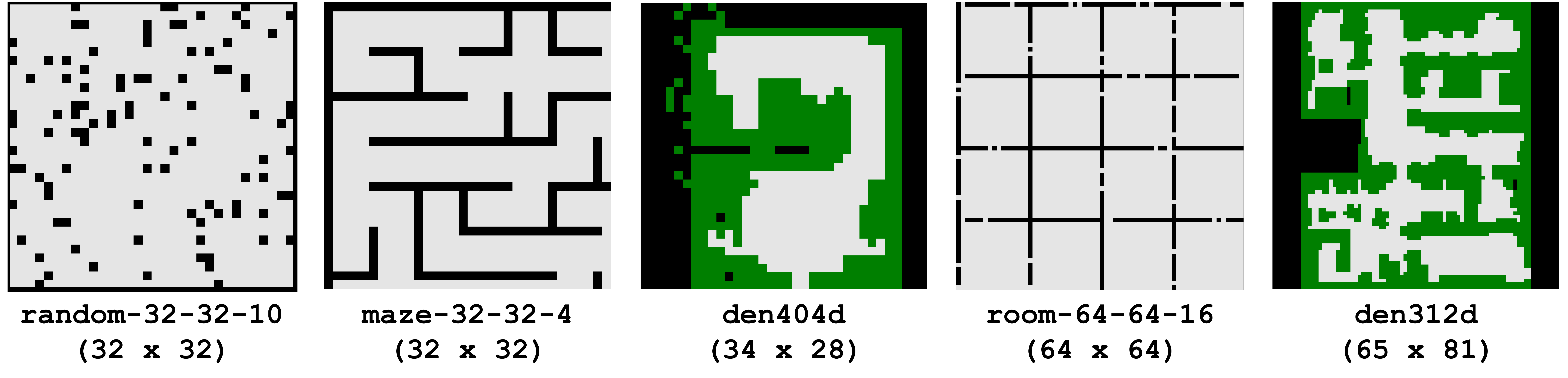}}
    \caption{Maps that are used in the experiments.} 
    \label{fig:maps}
\end{figure}
\vspace*{10pt}
\begin{figure}[!h]
    \centerline{\includegraphics[width=0.98\columnwidth]{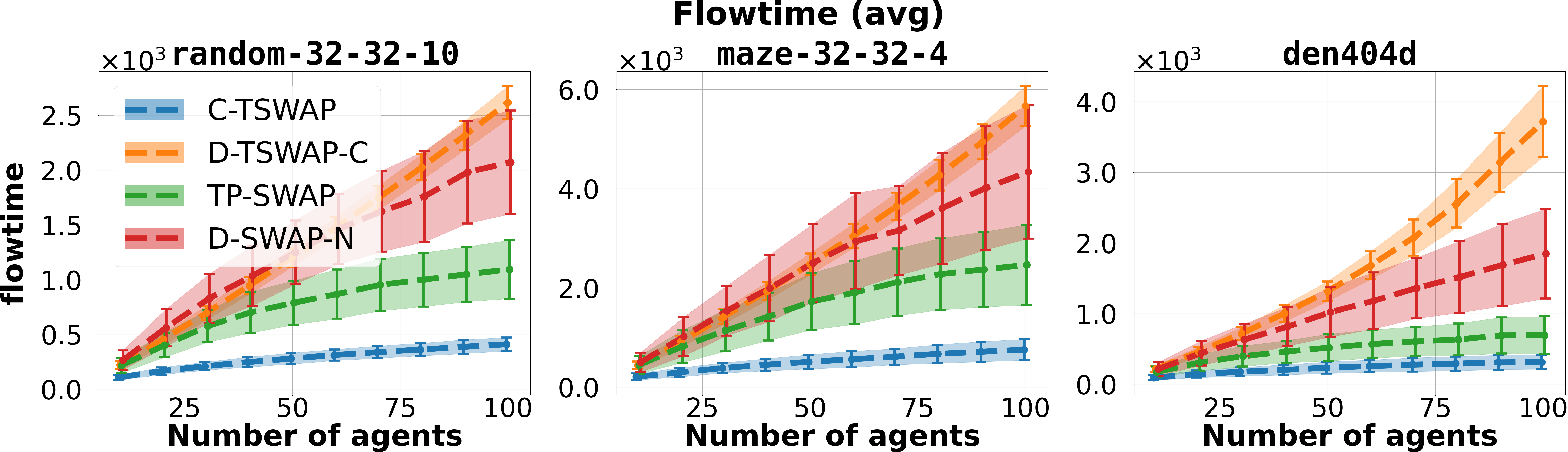}}
    \vspace{5pt}
    \centerline{\includegraphics[width=0.98\columnwidth]{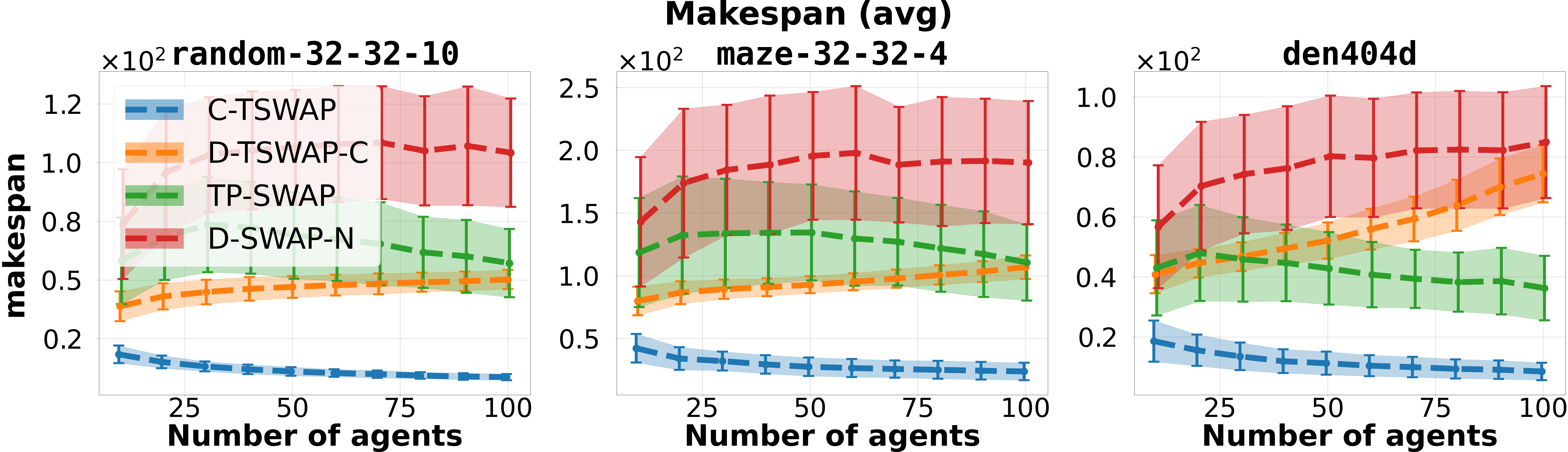}}
    \caption{Average \textit{\textbf{flowtime}}, \textit{\textbf{makespan}} and standard deviations (shaded ahead) of the evaluated AMAPF solvers. \vspace*{20 pt}} 
    \label{fig:flowtime_makespan}
\end{figure}

\section{Empirical Evaluation}
\label{sec:experiments}

\paragraph{Algorithms} Our evaluation considers the proposed decentralized AMAPF solvers and a centralized baseline, TSWAP. The latter is denoted as C-TSWAP (where ``C'' means centralized). In particular the following decentralized methods are evaluated: decentralized adaptation of TSWAP that relies on the initial random but consistent target assignment -- D-TSWAP-C; naive fully decentralized AMAPF solver that does not rely on consistent initial target assignment -- D-SWAP-N; its advanced variant that utilizes the suggested target and priority swapping procedure -- TP-SWAP. All decentralized methods are implemented by us\footnote{Source code: \url{https://github.com/PathPlanning/TP-SWAP}}. For C-TSWAP we used the original authors' implementation that utilize the bottleneck assignment (as this way of assigning targets was shown to perform better on average in the original paper).

\paragraph{Problem Instances} We utilized three grid maps from the MovingAI benchmark, which is well-known within the MAPF community~\citep{stern2019multi, sturtevant2012benchmarks}. The selected maps include \texttt{random-32-32-10}, \texttt{maze-32-32-4} and \texttt{den404d} (see Fig.~\ref{fig:maps}). They all have roughly the same size ($32 \times 32$) but differ significantly in topology.

For each map, we generate 250 different scenarios. Every scenario is a list of 100 start/target locations. To create an instance of $n$ agents from a scenario, we take first $n$ start-goal pairs from the list. In our experiments, we varied the number of agents from $10$ to $100$ with an increment of $10$. In total, for each map and each number of agents, we have 250 different problem instances.

The communication range for the decentralized algorithms was set to an area $5 \times 5$ cells with an agent in the center. The primary performance indicator we are interested in is the solution quality, measured as makespan and flowtime.

\paragraph{Makespan and Flowtime Metrics} Top row of Fig.~\ref{fig:flowtime_makespan} shows the average flowtime. The first important observation is that TP-SWAP, indeed, notably outperforms its naive decentralized counterpart, D-SWAP-N. The difference in their performance is getting more pronounced when the number of agents increases. On average, TP-SWAP is $2.3$ times better than D-SWAP-N across all the maps and numbers of agents. The standard deviation of flowtime values is also consistently smaller for TP-SWAP.

Interestingly, TP-SWAP also surpasses a semi-decentralized TSWAP variant with consistent initial target assignment, suggesting that TP-SWAP's initial assignment (where each agent picks the nearest target) is more effective, even if agents have to restore the assignment consistency.


The importance of the initial assignment is also exemplified by the performance of C-TSWAP (which is much better compared to the decentralized solvers). It confirms, that in case of smart centralized initial target assignment, one can achieve much better flowtime.

Regarding makespan (the bottom row of Fig.~\ref{fig:flowtime_makespan}), similar trends emerge, though TP-SWAP does not outperform D-TSWAP-C. This indicates that consistent target assignment has a stronger impact on makespan than flowtime. Notably, TP-SWAP's makespan nearly converges with D-TSWAP-C as the number of agents increases. Moreover, other algorithms (except D-TSWAP-C) also show decreased makespan with more agents.


This decrease in makespan may result from higher agent density, which aids in quicker recovery of a consistent assignment. C-TSWAP also benefits, as more agents allow finding closer targets initially. In contrast, D-TSWAP-C’s consistent assignment negates this effect, because its assignment is initially consistent, eliminating the need to restore consistency. Since the assignment is random, there is no advantage from goal proximity.


\paragraph{Impact of the Communication Range}

To investigate the impact of communication range on the performance of the proposed algorithm, we conducted a series of experiments on \texttt{maze-32-32-4} map using three different communication range sizes: $5 \times 5$, $11 \times 11$, and $21 \times 21$ cells.

\begin{table}[t]
    \centering
\begin{tabular}{c|ccc}
\toprule
{$n$} & $5 \times 5$ & $11 \times 11$ & $21 \times 21$ \\
\midrule
20  & 819  &508  & 414   \\
40  & 1423 &755  & 727   \\
60  & 1906 &994  & 989   \\
80  & 2279 &1297 & 1284  \\
100 & 2464 &1598 & 1560  \\
\bottomrule
\end{tabular}
    \vspace{10pt}
    \caption{Average \textit{\textbf{flowtime}} for TP-SWAP with varying communication ranges on the \texttt{maze-32-32-4} map.\\[5 pt]}
    \label{tab:ft_cr}
\end{table}

Table~\ref{tab:ft_cr} displays the flowtime values for varying communication ranges across different numbers of agents. The results demonstrate a significant performance boost when the communication radius increases from $5 \times 5$ to $11 \times 11$. However, this improvement diminishes when the radius is further extended to $21 \times 21$. These findings suggest that expanding the communication range enhances problem-solving efficiency, but only up to a certain point. Beyond this point, the overlap in communication ranges likely causes most agents to form a single, large connected group, meaning that further increases in the communication range do not substantially improve agent connectedness.

\begin{table}[!th]
    \centering
\begin{tabular}{c|cc|cc}
\toprule
{\textbf{Step}} &\multicolumn{2}{c|}{\texttt{den312d}} &\multicolumn{2}{c}{\texttt{room-64-64-16}} \\
{\textbf{limit}} &\textbf{TP-SWAP} &\textbf{D-TSWAP-N} &\textbf{TP-SWAP} &\textbf{D-TSWAP-N} \\

\midrule
600   &\textbf{100 \%}&\textbf{100 \%}&\textbf{100 \%}&\textbf{100 \%} \\
500   &\textbf{100 \%}&96 \% &\textbf{100 \%}&92 \% \\
400   &\textbf{94 \%}&67 \% &\textbf{92 \%}&62 \% \\
300   &\textbf{60 \%}&26 \% &\textbf{52 \%}&21 \% \\
200   &\textbf{11 \%}&4 \%&\textbf{12 \% }&2 \% \\
\bottomrule
\end{tabular}
    \\[10 pt]
    \caption{The \textit{\textbf{success rates}} of the fully decentralized AMAPF solvers under different timestep limits.\\[5 pt]}
    \label{tab:sr_dec}
\end{table}

\paragraph{Additional Comparison of the Fully Decentralized Solvers}

To get a more nuanced picture of how the performance of TP-SWAP differs from that of the basic fully-decentralized solver, D-TSWAP-N, we run additional experiments on two extra maps of slightly bigger size: \texttt{room-64-64-16} and \texttt{den312d} (see Fig.~\ref{fig:maps}). For each map, we generate 250 different instances involving 100 agents. Moreover, we introduce a timestep limit $T_{max}$ -- if the agents do not reach all the goals before the timestep $T_{max}$ we count this run as \emph{failure} (\emph{success} otherwise). We vary $T_{max}$ from 200 to 600 with an increment of 100. The results (success rate) are shown in Table~\ref{tab:sr_dec}.

As one can see, TP-SWAP solves a larger number of instances under any limit except 600 (when both methods solve all the tasks). Coupled with the results presented in the previous section, this confirms that the proposed target-priority swapping procedure is of utmost importance to the performance of the decentralized solvers.

\paragraph{Extended Results}

In the Supplementary material, extended empirical results are provided, including the ones on maps of different sizes and a more detailed analysis for various communication ranges, including statistics on the average number and size of connected subgroups.

\section{Discussion and Limitations}

This paper primarily addresses the theoretical aspects of decentralized multi-agent navigation, focusing on target selection and action choice, but several practical issues remain unaddressed.

Firstly, we assume that the agents possess all the necessary information about the others within their subgroup when choosing actions, which would require a specialized information-sharing mechanism in practice.

Secondly, we assume synchronized movements of agents. In the real world, e.g. when implementing our algorithm on robots, a decentralized motion synchronization might be needed, which, we believe, could be achieved via communication.

\section{Conclusion}


In this work, we have proposed a novel method to address the problem where a set of agents needs to reach a set of targets, and it does not matter which agent reaches a particular target. We focused on a particularly challenging and previously unsolved scenario in which the system is decentralized, allowing only local communication between the agents, and the initial goal assignment is inconsistent. To tackle this, we introduced an algorithm, TP-SWAP, specifically designed to solve this problem, and studied it both theoretically and empirically.

The experimental results demonstrated that TP-SWAP outperforms fully decentralized competitors in various scenarios and can achieve parity with, or even surpass, a semi-centralized solver that has access to consistent goal assignments, particularly in terms of flowtime. Future research directions include exploring more general AMAPF problem settings (e.g., colored MAPF), addressing communication issues, and implementing and evaluating our method on real robots.

\begin{ack}
The reported study was supported by the Ministry of Science and Higher Education of the Russian Federation under Project 075-15-2024-544.
\end{ack}



\bibliography{mybibfile}

\begin{thebibliography}{24}
\providecommand{\natexlab}[1]{#1}
\providecommand{\url}[1]{\texttt{#1}}
\expandafter\ifx\csname urlstyle\endcsname\relax
  \providecommand{\doi}[1]{doi: #1}\else
  \providecommand{\doi}{doi: \begingroup \urlstyle{rm}\Url}\fi

\bibitem[Ali and Yakovlev(2024)]{ali2024improved}
Z.~A. Ali and K.~Yakovlev.
\newblock Improved {{Anonymous Multi-Agent Path Finding Algorithm}}.
\newblock In \emph{Proceedings of the {{AAAI Conference}} on {{Artificial
  Intelligence}}}, volume~38, pages 17291--17298, 2024.

\bibitem[Barer et~al.(2014)Barer, Sharon, Stern, and
  Felner]{barer2014suboptimal}
M.~Barer, G.~Sharon, R.~Stern, and A.~Felner.
\newblock Suboptimal variants of the conflict-based search algorithm for the
  multi-agent pathfinding problem.
\newblock In \emph{Proceedings of The 7th Annual Symposium on Combinatorial
  Search ({SoCS} 2014)}, pages 19--27, 2014.

\bibitem[{\v{C}}{\'a}p et~al.(2015){\v{C}}{\'a}p, Nov{\'a}k, Kleiner, and
  Seleck{\`y}]{vcap2015prioritized}
M.~{\v{C}}{\'a}p, P.~Nov{\'a}k, A.~Kleiner, and M.~Seleck{\`y}.
\newblock Prioritized planning algorithms for trajectory coordination of
  multiple mobile robots.
\newblock \emph{IEEE Transactions on Automation Science and Engineering},
  12\penalty0 (3):\penalty0 835--849, 2015.

\bibitem[de~Wilde et~al.(2013)de~Wilde, ter Mors, and Witteveen]{de2013push}
B.~de~Wilde, A.~W. ter Mors, and C.~Witteveen.
\newblock Push and rotate: cooperative multi-agent path planning.
\newblock In \emph{Proceedings of the 12th International Conference on
  Autonomous Agents and Multiagent Systems ({AAMAS} 2013)}, pages 87--94, 2013.

\bibitem[Ferner et~al.(2013)Ferner, Wagner, and Choset]{ferner2013odrm}
C.~Ferner, G.~Wagner, and H.~Choset.
\newblock Odrm* optimal multirobot path planning in low dimensional search
  spaces.
\newblock In \emph{Proceedings of The 2013 IEEE International Conference on
  Robotics and Automation ({ICRA} 2013)}, pages 3854--3859. IEEE, 2013.

\bibitem[Ji et~al.(2021)Ji, Li, Pan, Gao, and Tu]{ji2021decentralized}
X.~Ji, H.~Li, Z.~Pan, X.~Gao, and C.~Tu.
\newblock Decentralized, unlabeled multi-agent navigation in obstacle-rich
  environments using graph neural networks.
\newblock In \emph{Proceedings of IEEE/RSJ International Conference on
  Intelligent Robots and Systems (IROS 2021)}, pages 8936--8943, 2021.

\bibitem[Khan et~al.(2019)Khan, Zhang, Li, Wu, Schlotfeldt, Tang, Ribeiro,
  Bastani, and Kumar]{khan2019learning}
A.~Khan, C.~Zhang, S.~Li, J.~Wu, B.~Schlotfeldt, S.~Y. Tang, A.~Ribeiro,
  O.~Bastani, and V.~Kumar.
\newblock Learning safe unlabeled multi-robot planning with motion constraints.
\newblock In \emph{Proceedings of IEEE/RSJ International Conference on
  Intelligent Robots and Systems (IROS 2019)}, pages 7558--7565, 2019.

\bibitem[Khan et~al.(2021)Khan, Kumar, and Ribeiro]{khan2021large}
A.~Khan, V.~Kumar, and A.~Ribeiro.
\newblock Large scale distributed collaborative unlabeled motion planning with
  graph policy gradients.
\newblock \emph{IEEE Robotics and Automation Letters}, 6\penalty0 (3):\penalty0
  5340--5347, 2021.

\bibitem[Li et~al.(2021)Li, Ruml, and Koenig]{li2021eecbs}
J.~Li, W.~Ruml, and S.~Koenig.
\newblock Eecbs: A bounded-suboptimal search for multi-agent path finding.
\newblock In \emph{Proceedings of the 35th {AAAI} Conference on Artificial
  Intelligence ({AAAI} 2021)}, pages 12353--12362, 2021.

\bibitem[Lowe et~al.(2017)Lowe, Wu, Tamar, Harb, Pieter~Abbeel, and
  Mordatch]{lowe2017multi}
R.~Lowe, Y.~I. Wu, A.~Tamar, J.~Harb, O.~Pieter~Abbeel, and I.~Mordatch.
\newblock Multi-agent actor-critic for mixed cooperative-competitive
  environments.
\newblock \emph{Proceedings of the Advances in neural information processing
  systems ({NIPS} 2017)}, 30, 2017.

\bibitem[Okumura and D{\'e}fago(2023)]{okumura2023solving}
K.~Okumura and X.~D{\'e}fago.
\newblock Solving simultaneous target assignment and path planning efficiently
  with time-independent execution.
\newblock \emph{Artificial Intelligence}, 321:\penalty0 103946, 2023.

\bibitem[Okumura et~al.(2022)Okumura, Machida, D{\'e}fago, and
  Tamura]{okumura2022priority}
K.~Okumura, M.~Machida, X.~D{\'e}fago, and Y.~Tamura.
\newblock Priority inheritance with backtracking for iterative multi-agent path
  finding.
\newblock \emph{Artificial Intelligence}, 310:\penalty0 103752, 2022.

\bibitem[Panagou et~al.(2019)Panagou, Turpin, and
  Kumar]{panagou2019decentralized}
D.~Panagou, M.~Turpin, and V.~Kumar.
\newblock Decentralized goal assignment and safe trajectory generation in
  multirobot networks via multiple lyapunov functions.
\newblock \emph{IEEE Transactions on Automatic Control}, 65\penalty0
  (8):\penalty0 3365--3380, 2019.

\bibitem[Sartoretti et~al.(2019)Sartoretti, Kerr, Shi, Wagner, Kumar, Koenig,
  and Choset]{sartoretti2019primal}
G.~Sartoretti, J.~Kerr, Y.~Shi, G.~Wagner, T.~S. Kumar, S.~Koenig, and
  H.~Choset.
\newblock Primal: Pathfinding via reinforcement and imitation multi-agent
  learning.
\newblock \emph{IEEE Robotics and Automation Letters}, 4\penalty0 (3):\penalty0
  2378--2385, 2019.

\bibitem[Sharon et~al.(2013)Sharon, Stern, Goldenberg, and
  Felner]{sharon2013increasing}
G.~Sharon, R.~Stern, M.~Goldenberg, and A.~Felner.
\newblock The increasing cost tree search for optimal multi-agent pathfinding.
\newblock \emph{Artificial intelligence}, 195:\penalty0 470--495, 2013.

\bibitem[Sharon et~al.(2015)Sharon, Stern, Felner, and
  Sturtevant.]{sharon2015conflict}
G.~Sharon, R.~Stern, A.~Felner, and N.~R. Sturtevant.
\newblock Conflict-based search for optimal multiagent path finding.
\newblock \emph{Artificial Intelligence Journal}, 218:\penalty0 40--66, 2015.

\bibitem[Skrynnik et~al.(2024)Skrynnik, Andreychuk, Nesterova, Yakovlev, and
  Panov]{skrynnik2024learn}
A.~Skrynnik, A.~Andreychuk, M.~Nesterova, K.~Yakovlev, and A.~Panov.
\newblock Learn to {{Follow}}: {{Decentralized Lifelong Multi-Agent
  Pathfinding}} via {{Planning}} and {{Learning}}.
\newblock In \emph{Proceedings of the {{AAAI Conference}} on {{Artificial
  Intelligence}}}, volume~38, pages 17541--17549, 2024.

\bibitem[Stern et~al.(2019)Stern, Sturtevant, Felner, Koenig, Ma, Walker, Li,
  Atzmon, Cohen, Kumar, Boyarski, and Bartak]{stern2019multi}
R.~Stern, N.~R. Sturtevant, A.~Felner, S.~Koenig, H.~Ma, T.~T. Walker, J.~Li,
  D.~Atzmon, L.~Cohen, T.~K.~S. Kumar, E.~Boyarski, and R.~Bartak.
\newblock Multi-agent pathfinding: Definitions, variants, and benchmarks.
\newblock \emph{Symposium on Combinatorial Search (SoCS)}, pages 151--158,
  2019.

\bibitem[Sturtevant(2012)]{sturtevant2012benchmarks}
N.~Sturtevant.
\newblock Benchmarks for grid-based pathfinding.
\newblock \emph{Transactions on Computational Intelligence and AI in Games},
  4\penalty0 (2):\penalty0 144 -- 148, 2012.
\newblock URL \url{http://web.cs.du.edu/~sturtevant/papers/benchmarks.pdf}.

\bibitem[Turpin et~al.(2014)Turpin, Michael, and Kumar]{turpin2014capt}
M.~Turpin, N.~Michael, and V.~Kumar.
\newblock Capt: Concurrent assignment and planning of trajectories for multiple
  robots.
\newblock \emph{The International Journal of Robotics Research}, 33\penalty0
  (1):\penalty0 98--112, 2014.

\bibitem[Wagner and Choset(2011)]{wagner2011m}
G.~Wagner and H.~Choset.
\newblock M*: {A} complete multirobot path planning algorithm with performance
  bounds.
\newblock In \emph{Proceedings of The 2011 {IEEE/RSJ} International Conference
  on Intelligent Robots and Systems ({IROS} 2011)}, pages 3260--3267, 2011.

\bibitem[Wang et~al.(2020)Wang, Liu, Li, and Prorok]{wang2020mobile}
B.~Wang, Z.~Liu, Q.~Li, and A.~Prorok.
\newblock Mobile robot path planning in dynamic environments through globally
  guided reinforcement learning.
\newblock \emph{IEEE Robotics and Automation Letters}, 5\penalty0 (4):\penalty0
  6932--6939, 2020.

\bibitem[Wang et~al.(2023)Wang, Xiang, Huang, and Sartoretti]{wang2023scrimp}
Y.~Wang, B.~Xiang, S.~Huang, and G.~Sartoretti.
\newblock Scrimp: Scalable communication for reinforcement-and
  imitation-learning-based multi-agent pathfinding.
\newblock In \emph{Proceedings of The 2023 {IEEE/RSJ} International Conference
  on Intelligent Robots and Systems ({IROS} 2023)}, pages 9301--9308, 2023.

\bibitem[Yu and LaValle(2013)]{yu2013multi}
J.~Yu and S.~M. LaValle.
\newblock Multi-agent path planning and network flow.
\newblock In \emph{Algorithmic Foundations of Robotics X: Proceedings of the
  Tenth Workshop on the Algorithmic Foundations of Robotics}, pages 157--173.
  Springer, 2013.

\end{thebibliography}


\appendix

\newpage

\begin{center}
    \Large{\textbf{Supplementary Material}}
\end{center}
\section{Extended Experiments}

\begin{figure}[!h]
    \centerline{\includegraphics[width=0.9\columnwidth]{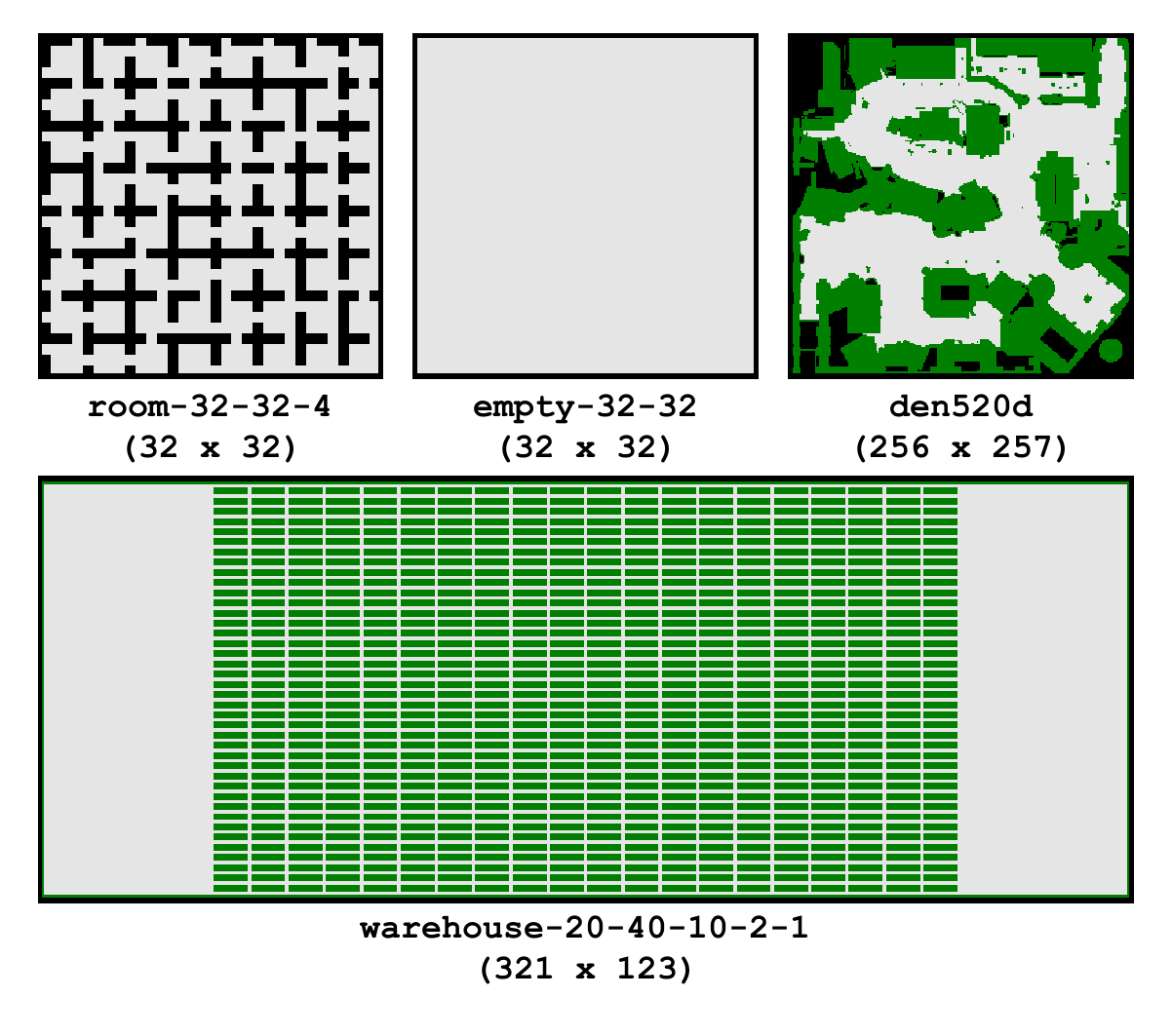}}
    \caption{Maps that are used in the extended experiments.} 
    \label{fig:maps_ext}
\end{figure}

\begin{figure*}[!t]
    \centerline{\includegraphics[width=0.9\textwidth]{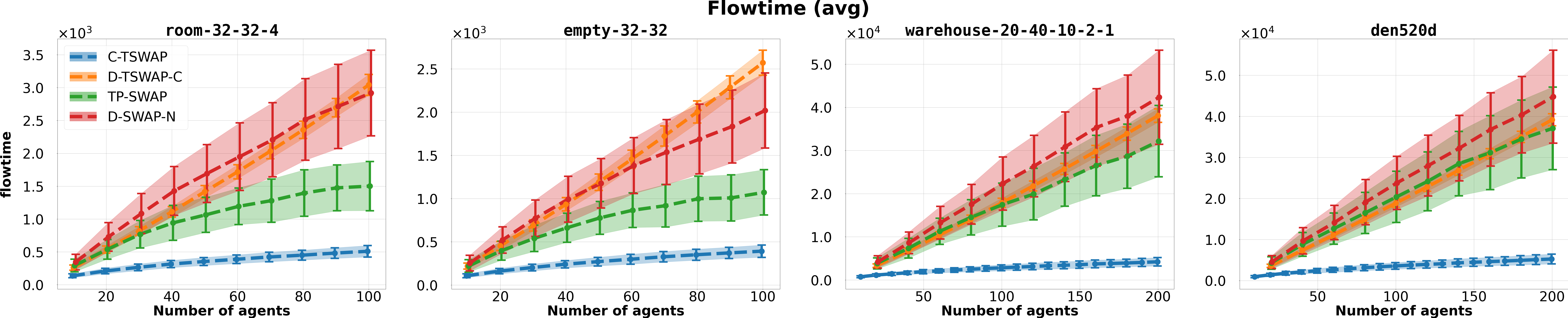}}
    \caption{Average \textit{\textbf{flowtime}} and its standard deviation (shaded ahead) of the evaluated AMAPF solvers on an extended set of maps} 
    \label{fig:flowtime_ext}
\end{figure*}

\begin{figure*}[!t]
    \centerline{\includegraphics[width=0.9\textwidth]{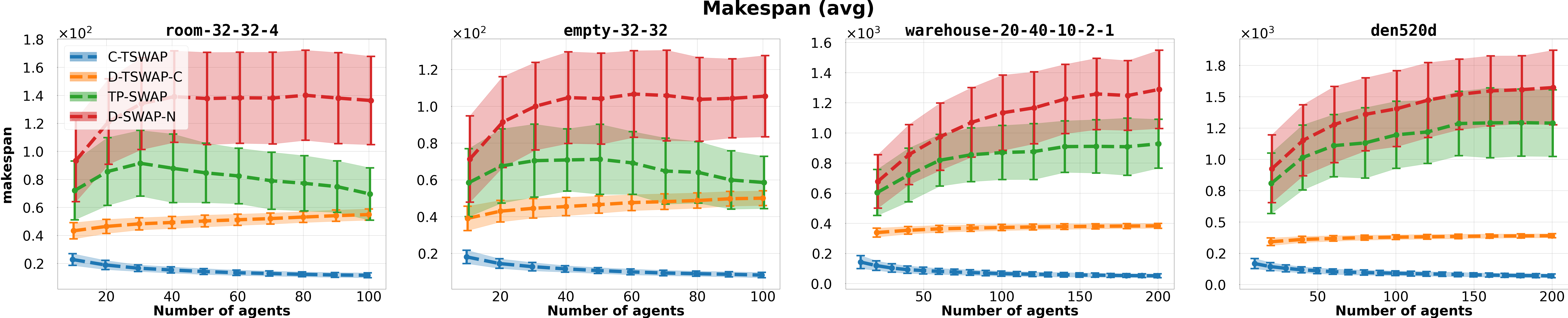}}
    \caption{Average \textit{\textbf{makespan}} and its standard deviation (shaded ahead) of the evaluated AMAPF solvers on an extended set of maps} 
    \label{fig:makespan_ext}
\end{figure*}

\subsection{Further Evaluation On Additional Maps}

To further validate the performance and scalability of the proposed decentralized algorithm, we conducted an additional series of experiments on two small (\texttt{random-32-32-10}, \texttt{empty-32-32}) and two large (\texttt{warehouse-20-40-10-2-1}, \texttt{den520d}) maps from the MovingAI benchmark~\citep{stern2019multi}.

Similarly to the experiments reported in the main body, we generated 250 different scenarios for each map, each scenario containing 100 start/target pairs on small maps and 200 start/target pairs on the large ones. The number of agents varied from 10 to 100 (20 to 200), with increments of 10 (20). The communication range for the decentralized algorithms was set to 5x5 cells with an agent in the center (as before).

The results are presented in Fig.~\ref{fig:flowtime_ext}-\ref{fig:makespan_ext}. They generally align with those described in Section~\ref{sec:experiments}. However, on the large maps, the difference between decentralized and centralized algorithms becomes more pronounced as the map size increases. Notably, the algorithm with the consistent initial goal assignment (D-TSWAP-C) shows superiority in makespan compared to the fully decentralized one (TP-SWAP). Despite this, the total solution duration (flowtime) remains similar across the solvers. Furthermore, as the number of agents increases, TP-SWAP begins to outperform D-TSWAP-C in terms of flowtime.

These effects can be attributed to the fact that the larger map size significantly complicates the coordination among decentralized agents. The inability to quickly recover a consistent assignment leads to the situations where some agents must visit multiple occupied targets before eventually finding an available one. This inefficiency contributes to widening of the performance gap between decentralized and centralized approaches on the larger maps.

Overall, the conducted additional experiments confirm that the proposed method is robust and capable of functioning effectively across various environments, including larger-scale settings. However, the size of the environment and the density of agents can notably impact the quality of its solutions, particularly when compared to the centralized algorithm.

\begin{table}[t]
    \centering
\begin{tabular}{c|ccc|ccc}
\toprule
{} & \multicolumn{3}{c|}{\textit{\textbf{Number of groups}}} & \multicolumn{3}{c}{\textit{\textbf{Groups' sizes}}} \\
{$n$} & $5 \times 5$ & $11 \times 11$ & $21 \times 21$ & $5 \times 5$ & $11 \times 11$ & $21 \times 21$\\
\midrule
10     &            9 &            6 &             2 &                1 &            2 &             6 \\
20     &           16 &            7 &             1 &                1 &            3 &            18 \\
30     &           21 &            5 &             1 &                1 &            7 &            29 \\
40     &           24 &            3 &             1 &                2 &           15 &            40 \\
50     &           27 &            2 &             1 &                2 &           29 &            50 \\
60     &           28 &            2 &             1 &                2 &           45 &            60 \\
70     &           28 &            1 &             1 &                3 &           60 &            70 \\
80     &           28 &            1 &             1 &                3 &           74 &            80 \\
90     &           27 &            1 &             1 &                3 &           86 &            90 \\
100    &           25 &            1 &             1 &                4 &           97 &           100 \\
\bottomrule
\end{tabular}
    \\[10 pt]
    \caption{Average \textit{\textbf{number of subgroups}} and average \textit{\textbf{subgroup size}} during task execution by the TP-SWAP algorithm on the \texttt{maze-32-32-4} map.\\[5 pt]}
    \label{tab:mg_mgs_cr}
\end{table}

\begin{table*}[!t]
    \centering
\begin{tabular}{c|cccc|cccc}
\toprule
{} & \multicolumn{4}{c|}{\textbf{\textit{makespan}}} & \multicolumn{4}{c}{\textbf{\textit{flowtime}}} \\

{$n$} & $5 \times 5$ & $11 \times 11$ & $21 \times 21$ & D-TSWAP-C & $5 \times 5$ & $11 \times 11$ & $21 \times 21$ & D-TSWAP-C \\
\midrule
10     &          118 &           87 &            68 &        80 &          445 &          340 &           266 &       439 \\
20     &          132 &           83 &            68 &        87 &          819 &          508 &           414 &       920 \\
30     &          134 &           78 &            72 &        89 &         1139 &          642 &           586 &      1419 \\
40     &          134 &           73 &            72 &        91 &         1423 &          755 &           727 &      1933 \\
50     &          134 &           74 &            73 &        93 &         1725 &          889 &           864 &      2479 \\
60     &          130 &           71 &            71 &        95 &         1906 &          994 &           989 &      3045 \\
70     &          127 &           73 &            74 &        98 &         2118 &         1130 &          1143 &      3643 \\
80     &          122 &           74 &            73 &       101 &         2279 &         1297 &          1284 &      4273 \\
90     &          117 &           75 &            74 &       103 &         2372 &         1437 &          1424 &      4944 \\
100    &          110 &           76 &            75 &       107 &         2464 &         1598 &          1560 &      5665 \\
\bottomrule
\end{tabular}
    \\[10 pt]
    \caption{Average \textit{\textbf{makespan}} and \textit{\textbf{flowtime}} for TP-SWAP with varying communication range sizes on the \texttt{maze-32-32-4} map. For comparison, results from TSWAP with a random consistent initial assignment (D-TSWAP-C) are also presented.\\[5 pt]}
    \label{tab:ms_ft_cr}
\end{table*}

\subsection{Additional Evaluation Of The Impact Of Varying Communication Range}

Tables~\ref{tab:mg_mgs_cr} and \ref{tab:ms_ft_cr} presents additional details of the experiment involving variation of the communication range. Table~\ref{tab:ms_ft_cr} shows the \textit{\textbf{makespan}}, \textit{\textbf{flowtime}} and Table~\ref{tab:mg_mgs_cr} contains statistics related to the subgroups of agents for each communication range across different numbers of agents. Additionally, Table~\ref{tab:mg_mgs_cr} includes results for the TSWAP algorithm with a consistent random initial assignment, denoted as D-TPSWAP-C.

The results for the makespan and flowtime are consistent with those presented in Section~\ref{sec:experiments}. Notably, the proposed method surpasses the D-TPSWAP-C approach once a certain agent density threshold is reached. This advantage arises because TP-SWAP initially selects targets based on proximity, whereas D-TPSWAP-C assigns targets randomly, potentially leading to greater initial distances between agents and their targets. As agent density increases, decentralized agents in TP-SWAP can rapidly re-establish a consistent assignment, often reaching better targets and thereby outperforming D-TPSWAP-C.

Examining the statistics on the average number of subgroups and the average number of agents within these subgroups (Table~\ref{tab:mg_mgs_cr}), we observe that with a communication range of $5 \times 5$, even with 100 agents on the relatively small $32 \times 32$ map, the agents do not consolidate into a single large group. Instead, they form multiple smaller subgroups. When fewer agents are present on the map, they tend to operate largely independently, only occasionally exchanging information. Despite this limited communication, the algorithm effectively solves the problem, as demonstrated by the results. Remarkably, it competes well with the partially centralized D-TPSWAP-C method, even under these conditions.

On the other hand, increasing the communication radius facilitates full coordination among agents across the map, often leading to the formation of a single large connected group. However, it is important to note that even with a communication range of $11 \times 11$ and 100 agents on the map, the average number of agents in a group does not equal the total number of agents. This suggests that even in these scenarios, some agents occasionally operate independently and without constant communication with the rest.

\end{document}